\documentclass{llncs}

\usepackage[utf8]{inputenc}

\usepackage[T1]{fontenc}
  
\usepackage{enumerate}

\usepackage{graphicx}
\usepackage{subfig}

\usepackage{amsmath}
\usepackage{amssymb}

\title{The hardness of Median in the synchronized bit communication model}

\author{Karolina Sołtys}
\institute{Faculty of Mathematics, Informatics and Mechanics\\[-0.8ex]
\small University of Warsaw \\
\email{ksoltys@students.mimuw.edu.pl}}

\begin{document}
\maketitle
\pagestyle{plain}

\begin{abstract}
The synchronized bit communication model, defined recently by Impagliazzo and Williams in \cite{IW}, is a communication model which allows the participants to share a common clock. The main open problem posed in this paper was the following: does the synchronized bit model allow a logarithmic speed-up for all functions over the standard deterministic model of communication? We resolve this question in the negative by showing that the Median function, whose communication complexity is $O(\log n)$, does not admit polytime synchronized bit protocol with communication complexity $O\left(\log^{1-\varepsilon} n\right)$ for any $\varepsilon > 0$. Our results follow by a new round-communication trade-off for the Median function in the standard model, which easily translates to its hardness in the synchronized bit model.

   \noindent \textbf{Keywords:} \emph{communication complexity, median, synchronized bit model, round complexity}
\end{abstract}

\section{Introduction}
   
Communication complexity, introduced by Yao in 1979 \cite{Yao}, is an important concept in complexity theory which tries to determine the amount of communication needed to compute a function whose input has been distributed among two or more participants. A natural question, especially in the context of the protocols used in distributed computing, is whether \emph{synchronous} protocols, in which the participants can use a common clock, are more powerful than \emph{asynchronous} ones, in which the players do not have this ability (note that the standard protocols studied in communication complexity are asynchronous). The synchronization allows the participants to convey some information by \emph{not} sending a message at a given moment of time, which naturally leads us to examine the time-communication trade-off of the protocol.
 
In their recent paper \cite{IW}, Impagliazzo and Williams formalized the notion of synchronous protocols and partially solved several interesting questions related to them by introducing two models of communication, called the synchronized bit model and the synchronized connection model, and studying their complexity. We can briefly summarize the synchronized bit complexity model as an extension of the standard deterministic model of communication, where a player in one step can send 0, 1 or a blank, and where blanks do not count towards the communication complexity of the protocol. It is interesting to consider the polytime bit complexity of a problem $\Pi$: the minimum complexity of any synchronized bit protocol for the problem $\Pi$ using a polynomial number of steps. This function is denoted by $PB(\Pi)$.

The authors prove in \cite{IW} the following bounds for the polytime bit complexity:
\begin{equation}\label{bit_bounds}
	\Omega\left(\frac{D(\Pi)}{\log n}\right) \leq PB(\Pi) \leq O\left(\frac{D(\Pi)}{\log \log n}\right)
\end{equation}
and conclude their paper with the following questions:
\begin{question}
	Can the upper bound on the polytime bit complexity in (\ref{bit_bounds}) be improved?
\end{question}
\begin{question}
	What is the complexity of Median in this model?
\end{question}

We answer both of these questions by proving the following result:
\begin{theorem}\label{median_hard}
	The Median function does not admit a polytime synchronized bit protocol with communication complexity $O\left(\log^{1-\varepsilon} n\right)$ for any $\varepsilon > 0$.
\end{theorem}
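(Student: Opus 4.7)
The plan is to split the argument in two: a simulation that turns any polytime synchronized bit protocol into a round-bounded protocol in the standard model with a moderate blow-up, and a round-communication trade-off for Median in the standard model that rules out the simulated protocol when its bit complexity is too small. For the simulation, any synchronized bit protocol for Median using $c$ bits in $T = \mathrm{poly}(n)$ steps can be converted into a standard deterministic protocol by prepending to each non-blank transmission an encoding of the number of blank steps since the previous one, costing $O(\log T) = O(\log n)$ additional bits per gap. Grouping consecutive bits sent by the same player into a single round yields a standard protocol with at most $r \leq c + 1$ rounds and $O(c \log n)$ bits in total. A hypothetical bit complexity of $c = O(\log^{1-\varepsilon} n)$ for Median would therefore yield a standard protocol using $r = O(\log^{1-\varepsilon} n)$ rounds and $O(\log^{2-\varepsilon} n)$ bits, and it suffices to rule out such protocols.

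The core step, then, is the following round-communication trade-off: for every $\varepsilon > 0$, no $r$-round standard deterministic protocol with $r \leq \log^{1-\varepsilon} n$ can compute Median on inputs from $[n]$ using only $O(r \log n)$ bits. The plan is to use round elimination against a distribution of Median inputs that embeds an iterated search tree of depth slightly larger than $r$: each instance is structured so that the position of the true median is determined by a sequence of $r+1$ choices, each from a large set of almost-balanced ``blocks'' at the next level of the hierarchy. If the first-round message of a protocol is too short to pinpoint the correct top-level block, a standard counting argument produces a combinatorially large sub-distribution on which the message is constant; the rest of the protocol then faces one fewer level and one fewer round on a proportionally smaller but still hard Median instance. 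Iterating this step $r$ times drives the problem down to a contradictory zero-round instance.

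The principal obstacle will be tuning the hard distribution and the quantitative form of the round-elimination step. The tree depth, branching factor, and block sizes must be balanced so that (a) the natural $O(\log n)$-bit protocol for Median genuinely requires many rounds on the instance, forcing real sequentiality; (b) round elimination composes cleanly, losing at most a constant fraction of inputs per step so that correctness survives $r$ iterations; and (c) the resulting threshold on bits per round scales with the branching factor at the rate needed to produce the $1-\varepsilon$ exponent in the final bound. Aligning these three constraints is the technical heart of the argument; once the distribution is in place, I expect the rest to reduce to a careful but routine application of standard rectangle-based lower-bound machinery.
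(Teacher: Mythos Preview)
Your simulation step matches the paper's: a polytime synchronized-bit protocol with $c$ non-blank bits becomes a standard protocol with $O(c)$ rounds and $O(\log n)$ bits per round, so it suffices to rule out $O(\log^{1-\varepsilon} n)$-round protocols for \textsc{Median} with $O(\log n)$ bits per round. From there, however, your route diverges from the paper's.

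The paper does \emph{not} attempt a direct round-elimination argument on \textsc{Median}. Instead it factors the lower bound through a chain of reductions: it introduces an auxiliary problem \textsc{Strategy} (alternating descent in a full binary tree), reduces \textsc{Strategy} to \textsc{Median} by a recursive padding construction, and reduces $k$-\textsc{Pointer-Jumping} to \textsc{Strategy} via an intermediate $k$-\textsc{Level-Strategy}. The round lower bound then comes for free from Nisan--Wigderson's $\Omega(n)$ communication bound for $(k{-}1)$-round $k$-\textsc{Pointer-Jumping}, together with a parameter choice $k(n)=\sqrt{n/\log n}$ that simultaneously violates both the ``enough rounds'' and ``enough communication'' escape routes. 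No new round-elimination machinery is developed.

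Your plan---design a hierarchical hard distribution for \textsc{Median} and run round elimination directly---is morally the same idea (embed a deep alternating search tree inside \textsc{Median}), but it leaves as ``routine'' precisely the part the paper avoids by citing Nisan--Wigderson. The constraints you list (sequentiality, clean composition of the elimination step, and the right quantitative loss) are real, and getting them to line up is essentially reproving the pointer-jumping lower bound inside \textsc{Median}. That is doable, but it is not a detail: it is the whole argument. The paper's modular approach buys you a complete proof with almost no new analysis, at the cost of defining two intermediate problems; your approach would be self-contained but requires you to actually carry out the round elimination you have only sketched.
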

Since the deterministic complexity of the Median function is $O(\log n)$, we get the following lower bound for the synchronized bit complexity of this problem:
$$\omega \left(\frac{D(\textsc{Median})}{\log^{\varepsilon} n}\right) \leq PB(\textsc{Median})$$
for each $\varepsilon > 0$, which of course forbids any significant improvement to the upper bound in (\ref{bit_bounds}) in general.

Theorem \ref{median_hard} can be easily translated, using methods established in \cite{IW}, in terms of the round-communication trade-off in the standard deterministic model.
\begin{theorem}
       The Median function does not admit a deterministic protocol using $O\left(\log^{1-\varepsilon} n\right)$ rounds and a logarithmic amount of communication at each round for any $\varepsilon > 0$.
\end{theorem}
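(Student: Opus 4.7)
The plan is to reduce the round-communication question in the standard deterministic model to synchronized bit complexity and then invoke Theorem~\ref{median_hard}. This is the translation that \cite{IW} employ to connect the two models, so the argument should be essentially a direct simulation.

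Suppose toward a contradiction that Median has a deterministic protocol $P$ using $R = O(\log^{1-\varepsilon} n)$ rounds, each round transmitting at most $c \log n$ bits for some constant $c$. I will turn $P$ into a polytime synchronized bit protocol $P'$ for Median whose total communication is $O(R)$, contradicting Theorem~\ref{median_hard}.

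The construction exploits the fact that in the synchronized bit model silence carries information at no cost. Pad each round of $P$ to the worst-case length $c \log n$, and regard the $r$-th message as an integer $m_r \in \{0, 1, \ldots, n^c - 1\}$. Assign to each round a disjoint time window of length $T := n^c$, scheduled consecutively. In $P'$, the party that speaks in round $r$ of $P$ remains silent for the first $m_r$ steps of the $r$-th window and then sends a single $1$. Because the clock is shared and the identity of the speaker in each round is fixed by $P$, the other party decodes $m_r$ simply by observing the delay, and round $r+1$ begins at a predetermined moment, so rounds do not interfere with one another.

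The total synchronized bit communication of $P'$ is exactly $R = O(\log^{1-\varepsilon} n)$, while its running time is $R \cdot T = O(n^c \log^{1-\varepsilon} n)$, which is polynomial in $n$. Hence $PB(\textsc{Median}) = O(\log^{1-\varepsilon} n)$, contradicting Theorem~\ref{median_hard}. I do not anticipate any genuine obstacle: the only care required is to pad rounds to their worst-case length and to use disjoint time windows so that decoding is unambiguous, which is the bookkeeping hidden behind the author's phrase ``easily translates.''
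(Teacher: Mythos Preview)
Your proposal is correct and is exactly the approach the paper intends: it states that Theorem~\ref{median_hard} ``can be easily translated, using methods established in \cite{IW},'' into the round-communication statement, and what you wrote is precisely that translation (encode each $O(\log n)$-bit round as a delay inside a polynomial-length time window, paying one bit per round). The only cosmetic remark is that the paper's underlying logic actually runs the other way---it first proves the round lower bound for Strategy (Theorem~\ref{strategy_hard}), reduces Strategy to Median in the round model (Section~\ref{median}), and only then passes to the synchronized bit model---so one could equally derive the present theorem directly from Theorem~\ref{strategy_hard} plus the Section~\ref{median} reduction without detouring through synchronized bits; but since the two-way translation from \cite{IW} makes the two formulations equivalent, your route is just as valid.
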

Our result provides a new round-communication trade-off for the Median function. The study of the round-communication trade-offs for various functions is an important area of communication complexity with significant applications to streaming algorithms (lower bounds for the rounds-communication trade-off in the deterministic communication complexity model imply the same bounds for the number of passes-memory trade-off in the streaming model; the opposite implication usually does not hold). With Median being a central problem in the streaming model, the fact that our approach can be used to prove some lower bounds for it in this model (albeit slightly weaker than those already known, \cite{MP}) in a completely different manner, can potentially be quite fruitful. 

To facilitate our proofs we define a natural problem, Strategy, which is easily seen as complete for the class of communication problems solvable in $O(\log n)$ rounds and $O(\log n)$ communication. Our reduction from Strategy to Median allows us to show that Median is also a complete problem for that class.

The structure of the paper is as follows. In section \ref{preliminaries} we briefly describe the relation between synchronized bit complexity and round complexity, and define the problem central to our paper, Strategy. In the following section we prove a round-communication trade-off for the Strategy problem by showing a reduction from the $k(\cdot)$-Pointer-Jumping problem, whose round-communication trade-off has been extensively studied (\cite{NW}, \cite{PRV}). The reduction uses an intermediate problem -- $k(\cdot)$-Level-Strategy. Although both reductions are quite straightforward, they do not preserve the size of the instance, which leads us to a system of asymptotic inequalities on the lower bound for Strategy. We then show a reduction from the Strategy problem to Median.

\section{Preliminaries}\label{preliminaries}
We assume that the reader is familiar with the basic notions of communication complexity \cite{book}.

   We will use the straightforward two-way translation between the synchronized bit protocols running in time $O(t)$ and using $O(b)$ bits of communication and deterministic protocols using $O(b)$ rounds and $O(\log t)$ communication at each round, which is explained in detail in \cite{IW}. We can therefore approach the problem in terms of round complexity: the existence of polytime synchronized bit protocol using $O(b)$ bits of communication is equivalent to the existence of a deterministic protocol using $O(b)$ rounds and $O(\log n)$ communication at each round. 

\begin{definition}
   We define the problem Strategy of size $n$ as follows. Let $T$ be a full binary tree with $n$ vertices. A function $f$ assigns to each vertex of the tree a number from the set $\{0,1\}$. Alice knows the values of $f$ in the vertices in the odd layers of the tree (the vertices with odd depth), and Bob knows the values of $f$ in the vertices in the even layers. We define the \emph{leaf reached by $f$} to be the leaf which is an endpoint of the path starting at the root and going always downwards -- to the left son of $v$ if $f(v) = 0$ and to the right son otherwise. The players' goal is to determine the index of the leaf they reach.
\end{definition}
\begin{remark}\label{strategy_complete}
The Strategy problem is complete for the class of communication problems solvable in $O(\log n)$ rounds and $O(\log n)$ communication. 
\end{remark}
\begin{proof}
We will reduce an arbitrary problem $\Pi$ of this class to the Strategy problem. The problem $\Pi$ has a deterministic protocol using $O(\log n)$ rounds and $O(\log n)$ communication. Without loss of generality we can assume that the protocol makes the players alternate in sending messages containing just one bit. In each Alice's (and analogously Bob's) vertex of the protocol tree and corresponding to a communication history consistent with her input, the message she sends depends only on her input and the communication history (in all the other vertices of Alice we fix her message in an arbitrary way). We can now transform the protocol tree into a Strategy tree by setting the value of the function $f$ in each vertex of the tree to the message sent in that vertex, and assigning to the leaves the outputs of the protocol for the given communication history. \qed
\end{proof}
\begin{remark}\label{strategy_upper}
The Strategy problem can be solved in $O\left(\frac{\log n}{\log \log n}\right)$ rounds and $O(\log n)$ communication at each round. 
\end{remark}
This upper bound may be easily obtained by using the reductions described in \cite{IW} to change the model to the synchronized bit model, use the upper bound proved therein for this model, and then translate the model back to the standard deterministic model.

\begin{definition}
   We define the problem $k(\cdot)$-Level-Strategy of size $n$ as follows. We have an $n$-ary tree $T$ of height $k(n)$, with leaves indexed from $1$ to $n^{k(n)}$. There is a function $f: T \rightarrow [n]$ for each vertex $v \in T$, Alice knows this function for the vertices in the odd layers and Bob knows it for the vertices in the even layers. As in Strategy, they want to determine the index of the leaf they descend to starting from the root and following the function $f$ ($f(v) = l$ means that if they arrive to the vertex $v$ they descend to the $l$-th son of $v$). Note that $n$ is a parameter, and not the input size; both Alice and Bob have input of size $O\left(n^k(n) \log n\right)$.
\end{definition}
\begin{definition}
   The problem $k(\cdot)$-Pointer-Jumping is defined as follows. Alice and Bob each hold a list of $n$ pointers, each pointing to a pointer in the list of the other. An initial pointer $v_0$ is marked. They want to determine the $k(n)$-th pointer they reach after following the pointers starting from $v_0$. 
\end{definition}
In \cite{NW} it was proved that if we allow just $k(n)-1$ rounds then $k(n)$-Pointer-Jumping requires $\Omega(n)$ communication.

\section{Round complexity of Strategy}\label{strategy}
   We will prove the following theorem by showing a sequence of reductions from Pointer Jumping to Strategy:
\begin{theorem}\label{strategy_hard}
   The Strategy function does not admit a deterministic protocol using $O\left(\log^{1-\varepsilon} n\right)$ rounds and a logarithmic amount of communication at each round for any $\varepsilon > 0$.
\end{theorem}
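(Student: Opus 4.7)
The plan is to derive the lower bound via two successive reductions, $k(\cdot)$-Pointer-Jumping $\le k(\cdot)$-Level-Strategy $\le$ Strategy, and then instantiate the parameter $k(\cdot)$ so that the $\Omega(n)$ round-communication trade-off for pointer-jumping from \cite{NW} contradicts the hypothetical Strategy protocol. The role of Level-Strategy is to bridge between the inherently $n$-ary structure of pointer-jumping and the binary structure of Strategy.

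For the first reduction I would identify each vertex at depth $i$ of the $n$-ary Level-Strategy tree with a sequence $(p_1,\ldots,p_i)\in[n]^i$ of pointer choices, and define $f(v)$ to be the target of pointer $p_i$ in the given pointer-jumping input. Alice's pointers then populate odd layers (which she owns) and Bob's populate even layers, matching the Level-Strategy information split, and the leaf reached by $f$ encodes exactly the $k$-step pointer-jumping answer. This reduction blows the instance from $O(n\log n)$ bits up to a tree of $\Theta(n^k)$ vertices, which is the source of the parameter juggling later on.

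The second reduction replaces each $n$-ary branching of the Level-Strategy tree by a binary subtree of depth $\lceil\log n\rceil$ whose leaves are the $n$ children: the owner of the original Level-Strategy vertex owns the whole subtree and uses its root-to-leaf path to transmit the $\log n$ bits encoding $f(v)\in[n]$. The resulting binary Strategy tree has $N=\Theta(n^k)$ vertices, so $\log N=\Theta(k\log n)$, and a Strategy protocol of round-complexity $R$ and per-round communication $C$ solves Level-Strategy with the same parameters, up to constants.

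To conclude, suppose for contradiction that Strategy admits a protocol with $R=O(\log^{1-\varepsilon}N)$ rounds and $C=O(\log N)$ bits per round. Chaining the two reductions produces a $k$-Pointer-Jumping protocol with $R$ rounds and total communication $RC=O((k\log n)^{2-\varepsilon})$. It now suffices to choose $k=k(n)$ such that simultaneously (i) $R<k-1$, so that the $\Omega(n)$ lower bound of \cite{NW} applies, and (ii) $RC=o(n)$, producing the contradiction. Condition (i) amounts to $(k\log n)^{1-\varepsilon}<k$, i.e.\ $k>(\log n)^{(1-\varepsilon)/\varepsilon}$ up to constants, and (ii) asks $(k\log n)^{2-\varepsilon}=o(n)$; a polylogarithmic choice such as $k(n)=\lceil(\log n)^{2/\varepsilon}\rceil$ satisfies both with slack. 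The main obstacle is precisely this balancing act: because the reductions inflate the instance exponentially, the pointer-jumping lower bound only becomes useful for Strategy after tuning $k$ as an appropriately slowly growing function of $n$. This is the ``system of asymptotic inequalities'' alluded to in the introduction; the reductions themselves are routine.
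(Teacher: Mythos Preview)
Your proposal is correct and follows essentially the same route as the paper---the same two reductions, the same invocation of the Nisan--Wigderson lower bound, and the same balancing of asymptotic inequalities---with a different but equally valid choice of parameter (you take $k(n)=\lceil(\log n)^{2/\varepsilon}\rceil$ where the paper takes $k(n)=\sqrt{n/\log n}$). One small imprecision worth tightening: in Strategy ownership alternates by layer, so a single player cannot ``own the whole'' depth-$\lceil\log n\rceil$ subtree; the standard fix (which the paper also handles somewhat loosely) is to fix the non-owner's bits to~$0$ within each gadget and double the depth if necessary, which only perturbs $N$ by a constant factor and leaves the argument intact.
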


Let $r(n)$ be some function such that $r(\Theta(n)) = \Theta(r(n))$ (we will fix it later). We will prove the following easy lemmas:
\begin{lemma}
   If the Strategy problem of size $m$ can be solved in $O(r(m))$ rounds using  $O(r(m) \log m)$ communication, then, for each $k(\cdot)$, $k(\cdot)$-Level-Strategy can be solved in $O\left(r\left(n^{k(n)}\right)\right)$ rounds using $O\left(r\left(n^{k(n)}\right) \log n^{k(n)}\right)$ communication.
\end{lemma}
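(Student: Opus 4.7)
The plan is to reduce $k(\cdot)$-Level-Strategy of parameter $n$ to a Strategy instance of size $m = \Theta(n^{k(n)})$ and then invoke the hypothesis. The idea is to replace each internal vertex $v$ of the $n$-ary tree by a binary \emph{gadget} whose internal path reads a value in $[n]$ and routes to one of $n$ successor gadgets, while the $n$-ary leaves become binary leaves with their indices preserved. Navigation through the whole binary tree then mirrors navigation through the original $n$-ary tree.

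The delicate point is respecting Strategy's strict layer-by-layer alternation between Alice and Bob. A straightforward $\log n$-deep gadget would collect $\log n$ consecutive same-player layers, which the definition does not permit. I would fix this by using gadgets of depth $2 \log n$ in which the owner's $\log n$ \emph{real} bits are interleaved with $\log n$ \emph{dummy} bits from the other player, all pinned to $0$. Concretely, in a gadget representing an Alice-owned $n$-ary node $v$: at each odd local layer Alice contributes one bit of the binary expansion of $f(v) \in [n]$, and at each even local layer a Bob-node has its left child equal to the next real node and its right child equal to a fresh terminal leaf, so that Bob's forced $0$ always steers the path back onto the real spine. Bob-owned gadgets are symmetric. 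Since every gadget spans an even number of layers, the global depth parity is preserved, so odd global layers remain Alice's and even global layers Bob's throughout.

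A geometric sum gives each gadget $\Theta(n)$ vertices, so with $\Theta(n^{k(n)-1})$ internal $n$-ary nodes the resulting binary tree has $m = \Theta(n^{k(n)})$ vertices in total. Both players can compute $f$ on their layers directly from their $k(\cdot)$-Level-Strategy inputs: real bits come from the known function values on their $n$-ary layers, and dummy bits are the constant $0$. The reached binary leaf uniquely identifies the $n$-ary leaf reached, so solving the Strategy instance solves the Level-Strategy instance. By the hypothesis together with $r(\Theta(n)) = \Theta(r(n))$, the resulting protocol uses $O(r(m)) = O(r(n^{k(n)}))$ rounds and $O(r(m)\log m) = O(r(n^{k(n)}) \log n^{k(n)})$ bits of communication.

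The main obstacle is exactly the alternation constraint, because each $n$-ary step inherently carries $\log n$ bits that must all be supplied by a single player; the dummy-interleaving trick circumvents this at the cost of a constant factor in depth and total size, which is cleanly absorbed by the $r(\Theta(n)) = \Theta(r(n))$ assumption and the extra logarithm in the communication bound.
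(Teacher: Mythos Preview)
Your proposal is correct and follows essentially the same approach as the paper: reduce $k(\cdot)$-Level-Strategy to Strategy by replacing each $n$-ary internal node with a binary gadget of height $\Theta(\log n)$ in which the non-owner's bits are pinned, so that a Strategy solution on the resulting size-$\Theta(n^{k(n)})$ tree reads off the Level-Strategy answer. You are in fact more explicit than the paper about the layer-alternation constraint, interleaving dummy layers so that the owner's $\log n$ routing bits fit into a depth-$\Theta(2\log n)$ gadget with $\Theta(n)$ vertices, whereas the paper's terse description (height $\lceil\log n\rceil$, ``fix Bob's input to $0$'') glosses over this point.
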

\begin{proof}
For each $k(\cdot)$, there is a simple reduction from $k(\cdot)$-Level-Strategy of size $n$ (i. e. with $n^{k(n)}$ leaves) to Strategy of size $O\left(n^{k(n)}\right)$. We will create a Strategy tree $S$ by replacing each Alice's vertex $v$ of the tree $T$ in $k(\cdot)$-Level-Strategy with a binary tree of height $\lceil\log n\rceil$, with $n$ leftmost leaves corresponding to the sons of $v$ in $T$. On every vertex of even depth in the subtree we fix Bob's input to be 0, and we fix Alice's input in the subtree so that the correct leaf is reached. Note that if $n$ is not a power of two, then the Strategy tree $S$ will be slightly larger than $T$, with additional vertices and not corresponding to vertices of $T$, but that is inconsequential, because our construction assures that these vertices are never reached by the protocol. It is easy to see that this is a proper reduction.\qed
\end{proof}

\begin{lemma}
   If $k(\cdot)$-Level-Strategy of size $n$ can be solved in $O\left(r\left(n^{k(n)}\right)\right)$ rounds using $O\left(r\left(n^{k(n)}\right) \log n^{k(n)}\right)$ communication, then $k(\cdot)$-Pointer-Jumping of size $n$ can also be solved in $O\left(r\left(n^{k(n)}\right)\right)$ rounds using $O\left(r\left(n^{k(n)}\right) \log n^{k(n)}\right)$ communication.
\end{lemma}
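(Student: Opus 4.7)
The plan is to give a direct reduction from $k(\cdot)$-Pointer-Jumping of size $n$ to $k(\cdot)$-Level-Strategy of size $n$, so that any protocol for the latter yields, without additional communication, a protocol for the former with the same round and communication bounds. Adopt the standard setup for Pointer-Jumping: Alice holds a function $f_A : [n] \to [n]$, Bob holds $f_B : [n] \to [n]$, the marked initial pointer $v_0$ is publicly known, and without loss of generality (by, if necessary, prepending a dummy step or swapping roles) assume the parity is arranged so that $v_1 = f_B(v_0)$, $v_2 = f_A(v_1)$, $v_3 = f_B(v_2)$, and so on; the goal is to output $v_{k(n)}$.

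Build the Level-Strategy instance on the full $n$-ary tree $T$ of height $k(n)$, whose vertices at depth $d$ are naturally indexed by sequences $(l_1, \dots, l_d) \in [n]^d$. Define $f$ at a vertex $v$ of depth $d$ using only the last coordinate $l_d$ of the path leading to $v$ (for the root take $l_0 := v_0$): at every vertex of even depth Bob sets $f(v) := f_B(l_d)$, and at every vertex of odd depth Alice sets $f(v) := f_A(l_d)$. Each player can compute these values locally from his or her Pointer-Jumping input alone, so the reduction itself costs no communication. A straightforward induction on $d$ shows that the path followed from the root under this $f$ is precisely $(v_1, v_2, \dots, v_{k(n)})$; consequently the index of the leaf reached, when read as a base-$n$ number, has last digit $v_{k(n)}$, which the players simply extract at the end.

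Given the hypothesis, the players run the assumed protocol for $k(\cdot)$-Level-Strategy on this virtual instance, answering every query about $f$ locally as prescribed above; this yields the $k(n)$-th pointer within $O(r(n^{k(n)}))$ rounds and $O(r(n^{k(n)}) \log n^{k(n)})$ total communication. The only genuine subtlety — and the point I would be most careful about — is the parity alignment between the owner of the initial pointer in Pointer-Jumping and the owner of the root in Level-Strategy; once that is fixed, the construction is just a local relabeling and correctness is immediate. Note in particular that although the Level-Strategy input is of size $\Theta(n^{k(n)} \log n)$ while the Pointer-Jumping input is only of size $\Theta(n \log n)$, this is irrelevant for the reduction, because neither player ever needs to materialize the whole Level-Strategy input: each value $f(v)$ is computed on demand from a single evaluation of $f_A$ or $f_B$.
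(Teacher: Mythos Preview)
Your reduction is correct and is essentially the same as the paper's: you unfold the pointer-jumping graph into the $n$-ary tree by letting the value of $f$ at a vertex depend only on which child it is (the last coordinate $l_d$) via the appropriate player's pointer function, so that the root-to-leaf path traces exactly $v_1,\dots,v_{k(n)}$ and the answer is the last coordinate of the leaf. Your write-up is in fact more careful than the paper's about the parity alignment at the root and about the fact that the Level-Strategy input is never materialized but computed on demand.
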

\begin{proof}
Given an instance $G$ of $k(\cdot)$-Pointer-Jumping of size $n$ we will create an instance of $k(\cdot)$-Level-Strategy of size $n$ (with $n^{k(n)}$ leaves), which will be, informally speaking, a tree of possible paths of length $k(n)$ in the graph $G$. In every odd layer (resp. even layer) every vertex that is an $i$-th son will point its $j$-th son if and only if $g_A(i) = j$, where $g_A$ is the Alice's (resp. Bob's) input in the Pointer-Jumping instance. It is easy to see that in this reduction every player can locally compute their input, and that the vertex reached by their functions in $k(\cdot)$-Level-Strategy is an $i$-th son if and only if the output for the $k(\cdot)$-Pointer-Jumping is $i$. \qed
\end{proof}

By combining the two lemmas we obtain the following
\begin{corollary}
If the Strategy problem of size $m$ can be solved in $O(r(m))$ rounds using  $O(r(m) \log m)$ communication, then, for each $k(\cdot)$, $k(\cdot)$-Pointer-Jumping of size $n$ can be solved in $O\left(r\left(n^{k(n)}\right)\right)$ rounds using $O\left(r\left(n^{k(n)}\right) \log n^{k(n)}\right)$ communication.
\end{corollary}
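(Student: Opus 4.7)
The plan is to chain the two preceding lemmas with no additional work. Starting from the hypothesis that Strategy of size $m$ admits a protocol running in $O(r(m))$ rounds and using $O(r(m) \log m)$ communication, I would first invoke the first lemma to obtain, for any $k(\cdot)$, a protocol for $k(\cdot)$-Level-Strategy of size $n$ that runs in $O\left(r\left(n^{k(n)}\right)\right)$ rounds and uses $O\left(r\left(n^{k(n)}\right) \log n^{k(n)}\right)$ communication. Since this matches \emph{exactly} the hypothesis of the second lemma, I would then apply that lemma to transfer the same round and communication bounds to $k(\cdot)$-Pointer-Jumping of size $n$, which is the conclusion of the corollary.

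The only non-routine point worth flagging is the mild asymptotic bookkeeping hidden in the first lemma: its reduction produces a Strategy instance of size $\Theta\left(n^{k(n)}\right)$ rather than exactly $n^{k(n)}$, because each internal vertex of the $k(\cdot)$-Level-Strategy tree is padded into a binary subtree of height $\lceil \log n \rceil$. To absorb this constant factor inside the argument of $r$ into the asymptotic bounds, we use the stipulated regularity property $r(\Theta(n)) = \Theta(r(n))$, which is precisely why that condition on $r$ was imposed at the beginning of the section. Beyond this, there is no real obstacle — the corollary is a one-line composition of the two reductions, and I would present it as such, without belaboring the verification.
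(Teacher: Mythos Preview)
Your proposal is correct and matches the paper's approach exactly: the paper presents the corollary with no separate proof, simply prefacing it with ``By combining the two lemmas we obtain the following.'' Your observation about the regularity assumption $r(\Theta(n)) = \Theta(r(n))$ absorbing the constant-factor blowup is apt and is indeed the reason that hypothesis was introduced, even though the paper leaves this implicit.
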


We can now prove the main theorem of this section.
\begin{proof}[of Theorem \ref{strategy_hard}]
In \cite{NW} it was shown that if we allow no more than $k(n)-1$ rounds then $k(\cdot)$-Pointer-Jumping of size $n$ requires $\Omega(n)$ communication. We thus know that for every $k(\cdot)$ the protocol for Strategy must yield, after using the reductions described, a protocol for $k(\cdot)$-Pointer-Jumping using either a greater number of rounds:
$$r\left(n^{k(n)}\right) \geq k(n)$$
or a greater amount of communication:
$$r\left(n^{k(n)}\right) \log n^{k(n)} \geq \Omega(n).$$

A function $r(n)$ which for any $k(n)$ violates both of these inequalities is thus a viable lower bound for Strategy, that is no $O(r(n))$-round, $O(r(n)\log n)$-communication protocol for Strategy may exist. 

It is easy to check that for all $\varepsilon > 0$ the function $r(n) = \log^{1-\varepsilon} n$ fails to satisfy both of these inequalities when we set $k(n) = \sqrt{\frac{n}{\log n}}$, which proves Theorem \ref{strategy_hard}.\qed
\end{proof}

Note here that if we were to prove a lower bound tightly matching the upper bound for Strategy, that is $O\left(\frac{\log n}{\log \log n}\right)$ (proved in Remark \ref{strategy_upper}), which we believe may be the case, we would need to use a different method, because close examination of the inequalities obtained by setting $r(n)=\frac{\log n}{\log \log n}$ reveals that, regardless of the function $k(n)$, at least one of the inequalities must be satisfied.

\section{Reduction from Strategy to Median}\label{median}
\begin{proposition}
   If the Median problem of size $n$ can be solved in $O(r(n))$ rounds using  $O(c(n))$ communication Strategy of size $n$ can also be solved in $O(r(n))$ rounds using $O(c(n))$ communication.
\end{proposition}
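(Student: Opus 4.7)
The plan is a communication-free reduction from Strategy to Median of polynomially related instance size; since the $r(\cdot)$ and $c(\cdot)$ of interest are polylogarithmic in $n$, this blow-up is absorbed asymptotically. Index the $L=\Theta(n)$ leaves of the Strategy tree $T$ from $1$ to $L$ left to right, giving every internal vertex $v$ at depth $d$ a subtree leaf-range $[a_v,b_v]$. The owner of $v$ (Alice on odd layers, Bob on even) contributes to her or his multiset a block of integers placed inside $[a_v,b_v]$, biased toward the left half when $f(v)=0$ and toward the right half when $f(v)=1$, with block size $w_d$ decreasing geometrically in $d$. The block sizes and positions depend only on $v$'s position in the tree and on $f(v)$, so each player can build her or his multiset locally from the $f$-values she or he owns, with no communication at all during the reduction step.

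I would then prove that the median of the combined multiset equals the index of the leaf reached by $f$, by induction on the depth along the root-to-$k^*$ path, with the invariant that after all contributions are tallied, the median lies inside the leaf-range of the on-path vertex at the current depth; applied at the leaf, the invariant pins the median to $k^*$ exactly. The main technical obstacle is controlling the contribution of the off-path sibling subtrees, whose $f$-values can be adversarial: the geometric weight schedule must be chosen so that, at each depth $d$, the cumulative weight of everything below depth $d$ in the off-path sibling subtree is strictly dominated by the on-path weight at depth $d$, and so the bias committed by shallower on-path nodes is never accidentally overridden by deeper off-path contributions. A geometric ratio strictly greater than the branching factor $2$ (for instance $w_d = 5^{h-d}$) achieves this, at the price of a polynomial blow-up in the instance size.

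With the reduction in hand the proposition is immediate: the players build their Median inputs locally, invoke the assumed $O(r(n))$-round, $O(c(n))$-communication Median protocol, and return the value it outputs as the reached leaf of the Strategy instance. \qed
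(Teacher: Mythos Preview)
Your high-level plan---assign each internal vertex a weight that decays geometrically with depth and let its owner push that weight to one side according to $f(v)$, so that the median follows the $f$-path---is the right instinct, but the placement rule you propose breaks the very invariant you want to maintain. The difficulty is that you put $v$'s block \emph{inside} its leaf-range $[a_v,b_v]$. Take the root: its $w_0=5^h$ elements sit somewhere in $[1,L]$, say in the left half when $f(\text{root})=0$. Whatever concrete position you pick inside $[1,L/2]$ (the extreme end, uniformly spread, or just left of $L/2$), that block lands in one definite depth-$2$ quarter. Now the on-path vertex at depth $2$ has weight only $5^{h-2}$, while the root's $5^h$ elements alone already outweigh the total contribution of all vertices at depth $\ge 1$; the median is therefore pinned to whichever quarter contains the root's block, independent of $f$ at the on-path depth-$1$ and depth-$2$ vertices. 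Your domination condition (off-path weight below depth $d$ is beaten by the on-path weight at depth $d$) is necessary but not sufficient: you also need the depth-$d$ on-path weight to beat the \emph{ancestors'} contributions that happen to lie inside the current range, and with weights that decrease in $d$ this is impossible.

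The paper's reduction avoids exactly this trap. It builds the Median universe recursively as a left padding zone of size $w_{k-1}$, then $T_l$'s instance, then $T_r$'s instance, then a right padding zone of size $w_{k-1}$; the root's owner selects one entire padding zone, which is \emph{outside} both children's ranges. Because the recursion lives only in the middle portion, shallower contributions never fall inside any deeper on-path subrange, and the invariant (the median lies in the on-path sub-instance) goes through cleanly. The paper also swaps Alice's and Bob's sets at each recursive level, since the subtree roots are owned by the other player---your sketch omits this, though it is an easy fix once the placement issue is repaired.
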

\begin{proof}
We will show here a reduction from Strategy$(T, f)$ to Median$(S, A, B)$, where $S$ is a set of natural numbers and $A, B$ are subsets of this set held by Alice and Bob respectively. The reduction will work inductively on the height $k$ of the tree $T$ of the Strategy problem. 

It is easy to reduce Strategy on trees of height 1 to Median over $S = \{0,1\}$: we give the empty subset to Bob and to Alice either the subset $\{0\}$ or $\{1\}$ depending on the value of $f$ in the root. The two possible values of Median will correspond to the two possible leaves reached by $f$.

Let us denote by $l_i$ the size of the set $S$ produced by the reduction for the trees of size $i$, and by $w_i$ the the number of elements given to Alice and Bob ($w_i = |A \cup B|$); we will construct the reduction inductively so that $l_i$ and $w_i$ are well defined.

We will now show the induction step for the trees of height $k$. Let $T_l$ be the tree of height $k-1$ rooted at the left son of the root, ant $T_r$ be the tree rooted at the right son. We denote by $r$ the root of the tree $T$ and by $A_l$ the subset of $S_{k-1}$ given to Alice by the reduction from Strategy$(T_l, f|_{T_l})$ (we define $A_r, B_l, B_r$ analogously). The reduction will create the sets:
\begin{eqnarray*}
   S & = & S_k = \{1, ..., l_k\} \textrm{, where } l_k = 2w_{k-1} + 2l_{k-1} \\ \\
   A & = & (B_{l} + w_{k-1}) \cup (B_{r} + (w_{k-1} + l_{k-1})) \cup \\
       & & \quad \quad \quad \quad \quad \quad  \cup \quad \begin{cases}
           \{1, ..., w_{k-1}\} &\textrm{if } f(r) = 0 \\
           \{l_{k}-w_{k-1}+1, ..., l_{k}\}    & \textrm{otherwise}
       \end{cases} \\ \\
   B & = & (A_{l} + w_{k-1}) \cup (A_{r} + w_{k-1} + l_{k-1})
\end{eqnarray*}
where $C+d = \{c + d \ |\  c \in C\}$.

It is easy to check that this is a proper reduction. The basic idea is that we give Alice some amount of small numbers if she turns left in $r$, and the same amount of big numbers if she turns to the right, so that the problem reduces either to finding the median in the subproblem of Median corresponding to $T_l$ or to finding the median in the subproblem corresponding to $T_r$, with the roles of the players reversed (because it is now Bob who holds the roots of $T_l$ and $T_r$).

Easy calculations of the recurrence relations for $w_k$ and $l_k$ show that these functions are exponential in $k$, so the reduction from Strategy on the tree of size $n$ produces an instance of Median of size $O(n)$.\qed
\end{proof}

Combining this reduction with Theorem \ref{strategy_hard} yields the proof of our main result, Theorem \ref{median_hard}.

It is worth noticing that this reduction, together with Remark \ref{strategy_complete}, proves as well that Median is complete for the class of communication problems solvable in $O(\log n)$ rounds and $O(\log n)$ communication.

\section{Conclusions}
Our results still hold in the randomized case. 

It is possible that a stronger lower bound for Median in the synchronized bit complexity model, tightly matching the upper bound of $O\left(\frac{\log n}{\log \log n}\right)$, may be proven. It would also be interesting to extend the synchronized bit model to the multiparty case and to study the complexity of the model in this setting.

In \cite{IW} the authors define also another synchronized model: the connection complexity model, which is based on the assumption that in each timestep every party decides whether to try to establish a connection. Some information is exchanged if and only if a connection has been established, and only successful connections count toward the communication cost of the protocol. The model turns out to be surprisingly powerful, enabling the participants to solve the Disjointness problem in polynomial time and only one bit of communication. We believe that also for this model the possible multiparty extension seems worth further examination.  

\medskip {\bf Acknowledgements.} I would like to thank Prof. Iordanis Kerenidis for encouraging me to write this paper and for fruitful discussions.


\begin{thebibliography}{1}

\bibitem{IW}
Russell Impagliazzo and Ryan Williams.
\newblock Communication complexity with synchronized clocks.
\newblock In {\em Proceedings of the 2010 IEEE 25th Annual Conference on
  Computational Complexity}, CCC '10, pages 259--269, Washington, DC, USA,
  2010. IEEE Computer Society.

\bibitem{book}
E.~Kushilevitz and N.~Nisan.
\newblock Communication complexity.
\newblock 1997.

\bibitem{MP}
J.~I. Munro and M.~S. Paterson.
\newblock Selection and sorting with limited storage.
\newblock Technical report, 1978.

\bibitem{NW}
Noam Nisan and Avi Wigderson.
\newblock Rounds in communication complexity revisited.
\newblock {\em SIAM J. Comput.}, 22:211--219, February 1993.

\bibitem{PRV}
Stephen~J. Ponzio, Jaikumar Radhakrishnan, and S.~Venkatesh.
\newblock The communication complexity of pointer chasing.
\newblock {\em J. Comput. Syst. Sci.}, 62:323--355, March 2001.

\bibitem{Yao}
Andrew Chi-Chih Yao.
\newblock Some complexity questions related to distributive
  computing(preliminary report).
\newblock In {\em Proceedings of the eleventh annual ACM symposium on Theory of
  computing}, STOC '79, pages 209--213. ACM, 1979.

\end{thebibliography}
\bibliographystyle{plain}

\end{document}